\newcommand{\R}{\ensuremath{\mathbb R}}
\providecommand{\ceil}[1]{\ensuremath{\left\lceil #1\right\rceil}}
\providecommand{\ceilgg}[1]{\ensuremath{\biggl\lceil #1\biggr\rceil}}
\providecommand{\ceilg}[1]{\ensuremath{\bigl\lceil #1\bigr\rceil}}
\providecommand{\parencite}{\citep}
\providecommand{\textcite}{\citet}
\newtheorem{proposition}{Proposition}
\providecommand{\toprule}{\hline}
\providecommand{\colrule}{\hline}
\providecommand{\botrule}{\hline}
\providecommand{\tbl}[2]{\begin{center}\small\caption{#1}#2\end{center}}
\providecommand{\crule}[1]{#1}
\newcommand{\minimize}{minimize}
\newcommand{\subjectto}{subject to}
\newcommand{\minproblem}{\@ifstar\minproblemstar\minproblemplain}
\newcommand{\minproblemplain}[3][]{
  \begin{align}
    \text{#1}\textbf{\minimize}\qquad & #2\\
    \textbf{\subjectto}\qquad & #3
  \end{align}
}
\newcommand{\minproblemstar}[3][]{
  \begin{align*}
    \text{#1}\textbf{\minimize}\qquad & #2\\
    \textbf{\subjectto}\qquad & #3
  \end{align*}
}
\newcommand{\qed}{\hfill\ensuremath{\blacksquare}}
\newenvironment{proof}{\textbf{Proof.}}{\qed}
\begin{document}


\title{Improved integer programming models for simple assembly line
  balancing and related problems}

\author{Marcus Ritt\\
  Instituto de Inform\'atica\\
  Universidade Federal do Rio Grande do Sul\\
  \texttt{marcus.ritt@inf.ufrgs.br}
  \and Alysson M. Costa\\
  Department of Mathematics and Statistics\\
  University of Melbourne, Parkville 3010, Australia.\\
  \texttt{alysson.costa@unimelb.edu.au} 
}

\maketitle

\begin{abstract}
  We propose a stronger formulation of the precedence constraints and
  the station limits for the simple assembly line balancing
  problem. The linear relaxation of the improved integer program
  theoretically dominates all previous formulations using impulse
  variables, and produces solutions of significantly better quality in
  practice. The improved formulation can be used to strengthen related
  problems with similar restrictions. We demonstrate their
  effectiveness on the U-shaped assembly line balancing problem and on
  the bin packing problem with precedence constraints.

  
\end{abstract}

\section{Introduction}

Let $(N,\leq)$ be a weak partially ordered set of tasks with integral
execution time $t_i$ for $i\in N$. The simple assembly line balancing
problem (SALBP) is to find an assignment $a:N\rightarrow S$ of the
tasks to a linear sequence of stations $S=\{1,2,\ldots,m\}$,
respecting the partial order, i.e.~all tasks $i,j\in N$ with $i\leq j$
satisfy $a(i)\leq a(j)$. The \emph{cycle time} of an assignment is the
largest time needed to execute the tasks assigned to some station. The
problem is said to be of \emph{type 1} (SALBP-1) when the goal is to
minimize the number stations for a given cycle time, and to be of
\emph{type 2} (SALBP-2) when the goal is to minimize the cycle time
for a given number of stations. The decision version of both problems
are NP-complete, since without precedence constraints SALBP-1 reduces
to the bin packing problem, and SALBP-2 to the problem of minimizing
the makespan of a schedule of the tasks on identical parallel
machines.

The SALBP has been extensively studied in the literature, and there
are excellent constructive and heuristic algorithms, as well as exact
solution methods available,
e.g.~\parencite{Scholl.Voss/1996,Fleszar.Hindi/2003,Blum/2008,Scholl.Klein/1999,Sewell.Jacobson/2012}.
A very good overview of the methods can be found in the survey of
\textcite{Scholl.Becker/2006}.

The problem is of interest to researchers, since it forms the core of
a large class of generalized assembly line balancing problems. These
include assembly lines of different layout, e.g.~U-shaped lines, lines
with assignment restrictions, varying task times, or setup
times. \textcite{Becker.Scholl/2006} survey generalized assembly line
balancing problems, and the \textcite{ALB} provides a property-based
search for information on these problems. Results obtained for the
SALBP can often be transferred to generalized problems.

The purpose of this paper is to show that this also applies to integer
programming models for the SALBP. Clearly, mathematical models solved
by standard solvers are not competitive with state-of-the-art
methods. They are nevertheless useful since they frequently serve as
benchmarks for better methods, and are a tool for studying new general
assembly line balancing problems, where such methods are not yet
available. Combined with reduction rules and heuristic solutions
integer programming models solved by standard solvers can be a
reasonable, prototypical solution method.

To obtain the best possible solution and to guarantee a fair
comparison to other methods it is necessary to select the best
model. In this paper we address this problem comparing theoretically
and computationally several models for the SALBP from the literature,
and some improved models proposed in this paper. A survey of models
for the SALBP can be found in \textcite{Baybars/1986} and
\textcite{Scholl/1999}. To the best of our knowledge, no theoretical
comparison of these models has been published before. A computational
study of some models has been provided by \textcite{Pastor.etal/2007}.

We argue that the results obtained for the SALBP can be generalized to
other assembly line balancing problems. This will be demonstrated by
two case studies. We show how the model for the U-shaped assembly line
balancing problem (UALBP-1) proposed by \textcite{Urban/1998} and the
model for the bin packing problem with precedence constraints (BPP-P),
which has been recently introduced by~\textcite{DellAmico.etal/2012}
can be improved by the formulations proposed in this paper. Urban's
model is widely used in the literature as originally
proposed~\parencite{Aase.etal/2003,Agpak.etal/2011,Chiang.etal/2004,Erin/2007,Gokcen.Agpak/2004,Kara.Tekin/2009},
and a better model may improve the results in these applications. In
the case of the BPP-P it turns out that the best integer model is
competitive with a sophisticated tailored branch-and-bound algorithm,
demonstrating its utility as a tool for obtaining a rapid, reasonable
solution method for new problems.

The remainder of this paper is organized as follows. In the next
section we present a formal definition, basic mixed-integer linear
models of the SALBP-1 and the SALBP-2 and further models from the
literature. In Section~\ref{sec:improved} we propose improved
formulations of the precedence constraints and the station limits, and
theoretically compare the resulting models with the existing ones. The
improved formulations are applied to related problems in
Section~\ref{sec:applications}. A computational study is presented in
Section~\ref{sec:computationalstudy} and we offer some conclusions in
Section~\ref{sec:conclusions}.

\section{Integer programming models for the SALBP}
\label{sec:problem}

In this section we review mathematical models for SALBP-1 and SALBP-2
that have been proposed in the literature. For a task $i\in N$, let
$F_i$ denote the set of its \emph{immediate followers}, and $P_i$ the
set of its \emph{immediate predecessors}. Let $S$ be the set of
stations. In the following we suppose that for the SALBP-1 an upper
bound $\overline m$ on the number of stations is known ($|N|$ is such
an upper bound), and that $S=\{1,\ldots,\overline m\}$. For the
SALBP-2 the number of stations $m$ is part of the problem instance. In
this case we set $S=\{1,\ldots,m\}$.

There have been three kinds of models proposed in the literature.
\textcite{Bowman/1960} (in the revised formulation of
\textcite{White/1961}) proposed two formulations, one using binary
\emph{impulse variables} and another based on \emph{time variables},
representing the starting time of the tasks. \textcite{Scholl/1999}
proposed a formulation using binary \emph{step variables} $\overline
x_{si}$, where $\overline x_{si}=1$ indicates that task $i\in N$ is
assigned to station $s\in S$ or some preceding station. Since the
formulation using time variables has been found inferior by
\textcite{Pastor.etal/2007} and the formulations using impulse
variables are the most common in the literature, we focus in the
following on the latter.

\subsection{Basic models for SALB}

To represent the assignment of tasks to station, we introduce impulse
variables
\begin{align}
  x_{si}=
  \begin{cases}
    1 & \text{if task $i\in N$ is assigned to station $s\in S$}\\
    0 & \text{otherwise.}
  \end{cases}\label{eq:dom}
\end{align}
Any feasible allocation has to satisfy the \emph{occurrence}
constraints
\begin{align}
  &\sum_{s\in S} x_{si} = 1 & \forall\,  i\in N, \label{eq:fa1}
\end{align}
which ensure that every task is allocated to a single station, the
\emph{precedence} constraints
\begin{align}
  &x_{tj} \leq \sum_{s\in S\mid s\leq t} x_{si} & \forall\,  i\in N,j\in F_i, t\in S,\label{pr:bw}
\end{align}
and the \emph{nondivisibility} constraints
\begin{align}
  &x_{si}\in\{0,1\} & \forall\,  i\in N, s\in S\label{eq:fa3}\mathrm{.}
\end{align}
These constraints have been first proposed by \textcite{Bowman/1960}
and their above form is due to \textcite{White/1961}. For a given
cycle time $c$ and an upper bound $\overline m$ on the number of
stations, the SALBP-1 can be formulated as
 \minproblem[(BW1-1)\quad]{\sum_{s\in S} y_s,\label{BB1:1}}{
   \sum_{i\in N} t_i x_{si} \leq c y_s & \forall\,  s\in S,\label{BB1:2}\\
   & \text{Equations \eqref{eq:fa1}--\eqref{eq:fa3},}\label{BB1:3}\\
   & y_s\in\{0,1\} & \forall\, s\in S\textrm{,}\label{BB1:4}}
 where the variables $y_s$ indicate the usage of station $s\in S$. For
 the SALBP-2 the cycle time $c$ is variable and the number of stations
 $m$ is fixed. It can be formulated as
 \minproblem[(BW1-2)\quad]{c,\label{BB2:1}}{
   \sum_{i\in N} t_i x_{si} \leq c & \forall\,  s\in S,\label{BB2:2}\\
   & \text{Equations \eqref{eq:fa1}--\eqref{eq:fa3},}\label{BB2:3}\\
   & c\in\R\textrm{.}\label{BB2:4}}

These formulations are due to \textcite{Baybars/1986}. In the
following two subsections we present improvements of these models by
adding station limits and better formulations of the precedence
constraints.

\subsection{Station limits}
\label{sec:stationlimits}

For a given cycle time $c$ and a given number of stations $m$ one
often can derive bounds on the stations a task can be assigned to. For
task $i\in N$, let $E_i(c,m)$ be the earliest and $L_i(c,m)$ the
latest admissible station. (For the SALBP these bounds can be set, for
instance, to $E_i(c,m)=\ceilg{\sum_{j\mid j\leq i} t_j/c}$ and
$L_i(c,m) = m+1-\ceilg{\sum_{j\mid i\leq j} t_j/c}$.)  Then, we can
restrict the domain of the decision variables,
substituting~\eqref{eq:fa3} by
\begin{align}
  & x_{si}\in\{0,1\} & \forall i\in N, E_i(c,\overline m)\leq s\leq L_i(c,\overline m)\label{EL-1}
\end{align}
in the formulation of the SALBP-1 and by
\begin{align}
  & x_{si}\in\{0,1\} & \forall i\in N, E_i(\overline c,m)\leq s\in L_i(\overline c, m)\label{EL-2}
\end{align}
in the formulation of the SALBP-2, where $\overline m$ is an upper
bound on the number of stations, and $\overline c$ an upper bound on
the cycle time~\parencite{Patterson.Albracht/1975}.

The station bounds can be strengthened as
follows~\parencite{Pastor.Ferrer/2009}. In the case of the SALBP-1,
when taking into account the currently used stations,
\begin{align}
  & x_{L_i(c,s),i} \leq y_{s} & \forall\,  s\in S, i\in N\label{PF-1}
\end{align}
is valid, since if station $s$ is unused, we can assume that this also
holds for all later stations and therefore the latest possible station
is now $L_i(c,s-1)=L_i(c,s)-1$.

For the SALBP-2 we can model the cycle time explicitly to obtain
better bounds on the stations. Let $\underline c$ be a lower bound on
the cycle time and let $C=[\underline c,\overline c]$ be the set of
admissible cycle times. Then we can represent the cycle time by
\begin{align}
  & c = \sum_{t\in C} tr_t, \label{RT:1}\\
  & \sum_{t\in C} r_t = 1,\\
  & r_t\in\{0,1\} & t\in C\mathrm{.}\label{RT:3}
\end{align}
This allows to add the following inequalities to the formulation
of the SALBP-2:
\begin{align}
  & x_{ei} \leq 1-\sum_{t\in C\mid e<E_i(t,m)} r_t & \forall\,  i\in N, E_i(\overline c,m)\leq e<E_i(\underline c,m)\mathrm{,}\label{PF-2:1}\\
  & x_{li} \leq 1-\sum_{t\in C\mid L_i(t,m)<l} r_t & \forall\,  i\in N, L_i(\underline c,m)<l\leq L_i(\overline c,m)\label{PF-2:2}\mathrm{.}
\end{align}
These inequalities are easily seen to be valid. Suppose, for example,
that $\sum_{t\in C\mid e<E_i(t,m)} r_t=1$ in
equation~\eqref{PF-2:1}. Then station $e$ comes before the earliest
possible station for task $i$ considering the current cycle time, and
therefore $x_{ei}=0$. Similarly, if $\sum_{t\in C\mid L_i(t,m)<l}
r_t=1$ in equation~\eqref{PF-2:2} station $l$ comes after the latest
possible station for task $i$ given the current cycle time, and
therefore $x_{li}=0$.


\subsection{Alternative formulations of the precedence constraints}

\textcite{Patterson.Albracht/1975} have proposed to formulate the
precedence constraints as
\begin{align}
  &\sum_{s\in S} s x_{si} \leq \sum_{s\in S} s x_{sj} & \forall\,  i\in N, j\in F_i\label{pr:pa}\mathrm{.}
\end{align}
\textcite{Thangevelu.Shetty/1971} give the alternative formulation
\begin{align}
  & \sum_{s\in S} (m-s+1)(x_{si}-x_{sj}) \geq 0 & \forall\,  i\in N,j\in F_i\mathrm{.}\label{pr:ts}
\end{align}

Observe that both formulations are equivalent, since, by the
occurrence constraints we have
\begin{align*}
  \sum_{s\in S} (m-s)x_{si}+\sum_{s\in S} sx_{si}=m\mathrm{.}
\end{align*}

\section{Improved models for the SALBP}
\label{sec:improved}

\subsection{Station limits}
\label{sec:stationslimits}

We can strengthen the station limits proposed by
\textcite{Pastor.Ferrer/2009} as follows. In constraint \eqref{PF-1},
when a task cannot be assigned to station $L_i(c,s)$ it also cannot be
assigned to any later station. This justifies
\begin{align}
  & \sum_{u\in S\mid u\geq L_i(c,s)} x_{ui} \leq y_s & \forall\,  s\in S,i\in N\label{SPF-1}\mathrm{.}
\end{align}

Similarly, in constraints \eqref{PF-2:1} and \eqref{PF-2:2} when a
task cannot be assigned to a station earlier than $e$ this holds also
for stations preceding $e$, and when a task cannot be assigned later
than station $l$, this holds also for stations following
$l$. Therefore we can strengthen these constraints to
\begin{align}
  & \sum_{u\in S\mid u\leq e} x_{ui} \leq 1-\sum_{t\in C\mid e<E_i(t,m)} r_t & \forall\,  i\in N, E_i(\overline c,m)\leq e<E_i(\underline c,m)\label{SPF-2:1}\mathrm{,}
\end{align} and
\begin{align}
  & \sum_{u\in S\mid u\geq l} x_{ui} \leq 1-\sum_{t\in C\mid L_i(t,m)<l} r_t & \forall\,  i\in N, L_i(\underline c,m)<l\leq L_i(\overline c,m)\label{SPF-2:2}\mathrm{.}
\end{align}

\subsection{Precedence constraints}

We propose the following improved formulation of the precedence
constraints:
\begin{align}
  \sum_{s\in S\mid s\leq k} x_{si} & \geq \sum_{s\in S\mid s\leq k} x_{sj} & \forall\,  i\in N,j\in F_i, k\in S\mathrm{.}\label{pr:rc2}
\end{align}

The two following propositions state the validity and the theoretical
strength of these new constraints.

\begin{proposition}
  Constraints~\eqref{pr:rc2} are valid for BW1-1 and BW1-2.
\end{proposition}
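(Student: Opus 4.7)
The plan is to verify that inequality \eqref{pr:rc2} is satisfied by every integer feasible solution of BW1-1 and BW1-2. Both formulations share the impulse variables $x_{si}$ together with the occurrence, precedence, and nondivisibility constraints \eqref{eq:fa1}--\eqref{eq:fa3}, so a single argument will cover both.

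First I would translate the inequality into a combinatorial statement about station assignments. By \eqref{eq:fa1} and \eqref{eq:fa3}, each task $i \in N$ is assigned to a unique station in any integer feasible solution, which I denote $s(i)$. In these terms, $\sum_{s \geq k} x_{si}$ equals $1$ if $s(i) \geq k$ and $0$ otherwise, so \eqref{pr:rc2} reduces to the implication: if $s(i) \geq k$ then $s(j) \geq k$.

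The crux is therefore to show that $i \ipre j$ forces $s(i) \leq s(j)$. This follows directly from the Bowman precedence constraint \eqref{pr:bw} applied at $t = s(j)$, which yields $1 = x_{s(j), j} \leq \sum_{s \leq s(j)} x_{si}$ and hence $s(i) \leq s(j)$. Given $s(i) \leq s(j)$, the required implication is trivial: $s(i) \geq k$ gives $s(j) \geq s(i) \geq k$. The only step that one might stumble on is recognizing that validity needs to be established only on integer feasible solutions rather than on the LP relaxation; trying to derive \eqref{pr:rc2} as an algebraic consequence of \eqref{pr:bw} alone would be more involved and, since the inequality is not valid for the relaxation of the precedence system in isolation, would have to invoke \eqref{eq:fa1} as well. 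The direct argument above is therefore the cleanest route.
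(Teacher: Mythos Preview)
Your proof is correct and follows essentially the same route as the paper's: both argue on integer feasible solutions by the case split ``left-hand side equals $0$'' versus ``left-hand side equals $1$,'' and in the latter case use $i\ipre j$ to force task $j$ onto station $k$ or later. The only difference is cosmetic: you spell out $s(i)\leq s(j)$ explicitly via constraint~\eqref{pr:bw}, whereas the paper appeals directly to the meaning of the precedence relation.
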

\begin{proof}
  If $\sum_{s\in S\mid s\leq k} x_{si}=1$ the constraint is trivially
  satisfied, since $\sum_{s\in S} x_{sj}=1$. Otherwise, task $i$ is
  executed on station $k+1$ or later.  But since $j\in F_i$ task $j$
  cannot be executed on a station preceding station $k+1$, i.e.,
  $\sum_{s\in S\mid s\leq k} x_{sj}=0$.
\end{proof}

\begin{proposition}
  \label{pr:dom1}
  Inequalities~\eqref{pr:rc2} strictly dominate
  inequalities~\eqref{pr:pa} and
  \eqref{pr:bw}. Inequalities~\eqref{pr:pa} and \eqref{pr:bw} are
  incomparable.
\end{proposition}
\begin{proof} Patterson and Albracht's inequalities~\eqref{pr:pa} and
  the equivalent inequalities~\eqref{pr:ts} of Thangavelu and Shetty
  are aggregated versions of inequalities~\eqref{pr:rc2}. Indeed,
  summing up inequalities~\eqref{pr:rc2} for $k\in S$, we have, for
  all $i\in N, j\in F_i$,
\begin{align*}
   \sum_{k\in S} \sum_{s\in S\mid s\leq k} x_{si}
     \geq \sum_{k\in S} \sum_{s\in S\mid s\leq k} x_{sj}
\end{align*}
which can be easily rewritten as Patterson and Albracht's
inequalities~\eqref{pr:pa} since $$\sum_{k\in S}\sum_{s\in S\mid s\leq
  k} x_{si} = \sum_{s\in S} (m-s+1)x_{si}\mathrm{.}$$

Moreover, inequalities~\eqref{pr:rc2} also imply Bowman's
inequalities~\eqref{pr:bw}. Indeed, for all $i\in N, j\in F_i$ and
$t\in S$, the smaller terms of both inequalities compare as
\begin{align*}
  x_{tj} \leq \sum_{s\in S\mid s\leq t} x_{sj}\mathrm{,}
\end{align*}
and since their larger terms are equal, inequalities \eqref{pr:rc2} are lifted versions of~\eqref{pr:bw}.

The strictly dominance of inequalities \eqref{pr:rc2} over~\eqref{pr:pa} and~\eqref{pr:bw}, and the incomparability of the latter two can be seen by means of an example. Let $N=\{a,b\}$, $a\leq b$ and $m=3$. It is easy to verify
that the fractional solution $x_{1a}=x_{2a}=1/2$, $x_{1b}=3/4$, and
$x_{3b}=1/4$ satisfies~\eqref{pr:pa}, but neither \eqref{pr:rc2}, nor
\eqref{pr:bw}, and the fractional solution
$x_{1a}=x_{1b}=x_{2b}=x_{3a}=1/2$ satisfies~\eqref{pr:bw}, but neither
\eqref{pr:rc2}, nor \eqref{pr:pa}.
\end{proof}

The results of proposition~\ref{pr:dom1} are summarized in Figure~\ref{fig:relationships}, which
shows the relationships between the models.

\begin{figure}
  \centering
  \begin{tikzpicture}[scale=1.0,every node/.style={draw=black}]
    \node (rc) at (2,4.5) { NF };
    \node (pa) at (3,3) { PA };
    \node (bw) at (1,3) { BW };
    \node (ts) at (4.5,3) { TS };
    \draw[<->] (pa) -- (ts);
    \draw[->] (pa) -- (rc);
    \draw[->] (bw) -- (rc);
    \draw[dotted] (bw) -- (pa);

    \node (a1) at (7,4.5) { A };
    \node (b1) at (8,4.5) { B };
    \draw[->] (b1) -- (a1);
    \node[draw=none,anchor=west] at (9,4.5) { A is better than B };
    \node (a2) at (7,3.75) { A };
    \node (b2) at (8,3.75) { B };
    \draw[dotted] (b2) -- (a2);
    \node[draw=none,anchor=west] at (9,3.75) { A and B are incomparable };
    \node (a3) at (7,3.0) { A };
    \node (b3) at (8,3.0) { B };
    \draw[<->] (b3) -- (a3);
    \node[draw=none,anchor=west] at (9,3.0) { A and B are equivalent };
  \end{tikzpicture}
  \caption{Relationships between models of the SALBP with different
    formulations of precedence constraints. They are valid for the
    SALBP-1 as well as the SALBP-2 with the same station limits.}
  \label{fig:relationships}
\end{figure}
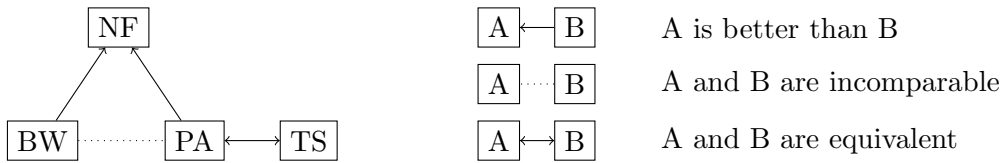

Tables~\ref{tab:1} and \ref{tab:2} give a summary of models with
different precedence constraints and station limits for the SALBP-1
and the SALBP-2 and their number of variables and restrictions.  We
obtain a different model for each combination of the precedence
constraints, the stations limits and the base model. For example,
BW2-1 denotes the formulation of the SALBP of type 1, using the
precedence constraints of Bowman~\eqref{pr:bw}, and station limits
\eqref{EL-1}. Note that equations~\eqref{EL-1} and \eqref{EL-2} do not
increase the number of restrictions, but reduce the number of
variables.  In both tables $o$ denotes the \emph{order strength} of
the instance, i.e., the fraction of the at most $\binom{n}{2}$
precedence relations present in the instance. The order strength of an
instance is at most $1$.

\begin{table}
  \centering
  \caption{Different formulations of the SALBP-1.\label{tab:1}}

  \medskip
  \begin{tabular}{lllllllllll}
    \toprule
    \multicolumn{4}{l}{Base model}                                                                                                        \\    
    Suffix                                     & Equations      & \#Var.               & \#Res.                                               \\
    \colrule
    -1                                         & \eqref{eq:fa1},\eqref{eq:fa3},\eqref{BB1:1},\eqref{BB1:2},\eqref{BB1:4}      & $\overline m(n+1)$ & $\overline m+n$                                    \\
    \colrule
    \multicolumn{3}{l}{Precedence constraints} & \multicolumn{3}{l}{Station limits}                                                       \\
    Name                                       & Equations      & \#Res.               & Name & Equations                  & \#Res.           \\ 
    \colrule
    PA                                         & \eqref{pr:pa}  & $on^2$             & 1    & -                          & -              \\
    BW                                         & \eqref{pr:bw}  & $on^2\overline m$  & 2    & \eqref{EL-1}               & -              \\
    TS                                         & \eqref{pr:ts}  & $on^2$             & 3    & \eqref{EL-1},\eqref{PF-1}  & $\overline mn$ \\
    NF                                         & \eqref{pr:rc2} & $on^2\overline m$  & 4    & \eqref{EL-1},\eqref{SPF-1} & $\overline mn$ \\
    \botrule
  \end{tabular}
\end{table}

\begin{table}
  \centering
  \caption{Different formulations of the SALBP-2.\label{tab:2}}

  \medskip
  \begin{tabular}{lllllllllll}
    \toprule
    \multicolumn{4}{l}{Base model}                                                                                                                                           \\
    Suffix                                     & Equations      & \#Var.      & \#Res.                                                                                           \\
    \colrule
    -2                                         & \eqref{eq:fa1},\eqref{eq:fa3},\eqref{BB2:1},\eqref{BB2:2},\eqref{BB2:4}  & $nm+1$    & $m+n$                                                                                          \\[0.25cm]
    \colrule
    \multicolumn{3}{l}{Precedence constraints} & \multicolumn{3}{l}{Station limits}                                                                                          \\ 
    Name                                       & Equations      & \#Res.      & Name & Equations                                                              & \#Res.           \\
    \colrule
    PA                                         & \eqref{pr:pa}  & $on^2$    & 1    & -                                                                      & -              \\
    BW                                         & \eqref{pr:bw}  & $on^2m$   & 2    & \eqref{EL-2}                                                           & -              \\
    TS                                         & \eqref{pr:ts}  & $on^2$    & 3    & \eqref{EL-2},\eqref{RT:1}-\eqref{PF-2:2}                               & $mn$           \\
    NF                                         & \eqref{pr:rc2} & $on^2m$   & 4    & \eqref{EL-2},\eqref{RT:1}-\eqref{RT:3},\eqref{SPF-2:1},\eqref{SPF-2:2} & $\overline mn$ \\
    \botrule
  \end{tabular}
\end{table}

\section{Application to related problems}
\label{sec:applications}

\subsection{An improved model for the UALBP-1}
\label{sec:ualbp1}

U-shaped assembly lines are an alternative to the traditional linear
layout, where tasks first pass all stations in the forward direction,
and then pass them again in the backward direction, in form of an
U. Therefore, the tasks assignable to a station include, besides tasks
whose predecessors have been assigned to a preceding station, also the
tasks whose successors have been assigned to a preceding station. This
added flexibility can improve the line's balance or reduce the number
of required stations. The problem of optimally allocating tasks to
stations is known as the U-line balancing problem
(UALBP). \textcite{Urban/1998} has proposed an integer linear program
for solving the UALBP-1 which is often used in studies of the UALBP
(e.g.~in~\parencite{Aase.etal/2003,Agpak.etal/2011,Chiang.etal/2004,Erin/2007,Gokcen.Agpak/2004,Kara.Tekin/2009}).

Introducing decision variables
\begin{align}
  \label{eq:dv}
  & x_{si} =
  \begin{cases}
    1 & \text{if task $i\in N$ is executed on station $s\in S$ in the forward pass}\\
    0 & \text{otherwise,}
  \end{cases}\\
  \intertext{and}
  & w_{si} = 
  \begin{cases}
    1 & \text{if task $i\in N$ is executed on station $s\in S$ in the backward pass}\\
    0 & \text{otherwise,}
  \end{cases}
\end{align}
the UALBP-1 is solved by the integer linear program
\minproblem{\sum_{s\in S} y_s}{
    \sum_{s\in S} x_{si}+w_{si} = 1 & \forall\,  i\in N,\\
  & \sum_{i\in N} t_i(x_{si}+w_{si})\leq cy_s & \forall\,  s\in S,\displaybreak[0]\\
  & \sum_{s\in S} (\overline m-s+1)(x_{si}-x_{sj}) \geq 0 & \forall\,  i\in N,j\in F_i,\label{ua:pr1}\\
  & \sum_{s\in S} (\overline m-s+1)(w_{si}-w_{sj}) \geq 0 & \forall\,  i\in N,j\in P_i,\label{ua:pr2}\\
  & x_{si}\in\{0,1\},w_{si}\in\{0,1\},y_s\{0,1\} & \forall\,  s\in S,i\in N\textrm{.}}

Due to the U-shaped layout every task may be assigned to the last
station, and the station limits for the last station do not
apply. Therefore, Urban applies only the bound
\begin{align}
  \label{eq:sl}
  E_i(c,m)=\min\left\{ \ceilgg{\sum_{j:j\leq i} t_j/c},\ceilgg{\sum_{j:i\leq j} t_j/c}\right\}
\end{align}
on the earliest station.

The above model uses precedence constraints as proposed by
\textcite{Thangevelu.Shetty/1971} for the SALBP. It can therefore be
improved by substituting \eqref{ua:pr1} and \eqref{ua:pr2} by the
precedence constraints
\begin{align}
  & \sum_{s\in S\mid s\leq k} x_{si} \geq \sum_{s\in S\mid s\leq k} x_{sj} & \forall\, i\in N,j\in F_i,k\in S,\\
  & \sum_{s\in S\mid s\leq k} w_{si} \geq \sum_{s\in S\mid s\leq k} w_{sj} & \forall\, i\in N,j\in P_i,k\in S\mathrm{.}
\end{align}
Furthermore, the station limits may be applied separately to the
forward and backward pass
\begin{align}
  & x_{si}\in\{0,1\} & \forall i\in N, E_i(c,\overline m)\leq s\mathrm{,}\\
  & w_{si}\in\{0,1\} & \forall i\in N, L_i(c,\overline m)\leq s\mathrm{,}
\end{align}
where $E_i(c,m)=\ceilg{\sum_{j:j\leq i} t_j/c}$ and $L_i(c,m) =
\ceilg{\sum_{j:i\leq j} t_j/c}$.

\subsection{An improved model for the BPP-P}
\label{sec:bppp}

The bin packing problem with precedence constraints asks to pack a set
of items into the smallest number of bins of a fixed size, with the
additional restriction that an item cannot share a bin with one of its
predecessors or successors.  It has been studied recently
by~\textcite{DellAmico.etal/2012}, who propose a mathematical model,
lower bounds, as well as heuristic and exact algorithms.

The BPP-P can be seen as a variant of the SALBP-1 with strict
precedences. Therefore, the improvements for SALBP-1 can be applied to
the model for the BPP-P, substituting the precedence
constraints~\eqref{pr:rc2} by their strict variant
\begin{align}
  \sum_{s\in S\mid s<k} x_{si} & \geq \sum_{s\in S\mid s\leq k} x_{sj} & \forall\,  i\in N,j\in F_i, k\in S\mathrm{.}\label{pr:rc2s}
\end{align}

The improved constraints for the station bounds~\eqref{SPF-1} also
apply to the BPP-P. They are stronger in the BPP-P, because the bounds
on the earliest station $E_i$ and the latest station $L_i$ a task can
be assigned to, improves when taking the strict precedences into
account. In our experiments below we use the better among the limits
for the SALBP-1 given in Section~\ref{sec:stationlimits} and the
limits imposed by the longest chain of predecessors or successors for
the earliest and latest station, respectively, for each task.

\section{Computational experiments}
\label{sec:computationalstudy}

We empirically evaluated the performance of all formulations of the
SALBP-1 and the SALBP-2 presented in Tables~\ref{tab:1} and
\ref{tab:2} and the improved formulations of the UALBP-1 and the
BPP-P.

For the SALBP, we limited the comparison to the best known
formulations PA and BW from the literature and the new formulation NF
for the precedence constraints combined with all four sets of
equations for the station limits, for a total of $12$ SALBP-1 and
SALBP-2 formulations. For the UALBP-1 and the BPP-P we compare the
model as originally proposed with the theoretically best model. In
this latter case, we also compare the results of the new model with
the results obtained with the tailored branch-and-bound algorithm of
\textcite{DellAmico.etal/2012}. The detailed results reported in the
tables below are available online at
\url{http://www.inf.ufrgs.br/algopt/albp}.

\subsection{Results for SALBP-1 and SALBP-2}

The formulations for the SALBP have been tested on the standard
benchmark which contains $269$ instances of the SALBP-1 and $302$
instances of the SALBP-2.
All instances are available online~\parencite{ALB}. Currently the
optimal value is known for the $269$ SALBP-1 instances and all except
$14$ of the SALBP-2 instances. In the evaluations below solutions for
instances without a known optimum were considered optimal only if the
solver could prove so. When comparing solution values obtained by
different formulations, a result is considered better when the null
hypothesis of no improvement in solution values can be rejected at
significance level $p=0.05$. In all tests, the test statistic used is
a conservative, non-parametric paired sign test, where half of the
ties were assigned to each sample~\parencite{Dixon.Mood/1946}.

The experiments were performed on a PC with an Intel Core i7 CPU
running at $2.8$ GHz and $12$ GB of main memory. We used the solver
CPLEX 12.4 with standard options, except for a MIP optimality gap of
$10^{-5}$, running in deterministic mode with two threads for a
maximum time of $600$ seconds. The computation times reported are in
seconds of real time. Following \textcite{Pastor.Ferrer/2009} we use
in our experiments for the SALBP-1 the lower bound $\underline
m=\ceil{\sum_{i\in N} t_i/c}$ and the upper bound $\overline
m=\min\{2\underline m,|N|\}$ on the number of stations, and for the
SALBP-2 the lower bound $\underline c=\max\{\max_{i\in N}
t_i,\ceil{\sum_{i\in N} t_i/m}\}$ and the upper bound $\overline
c=2\underline c$ on the cycle time.

Table~\ref{tab:3} contains the results for the SALBP-1 and
Table~\ref{tab:4} for the SALBP-2. For each tested model, we report
the number of instances for which the branch-and-cut solver of CPLEX
found a provably optimal solution within the time limit (Proven) and
the average solution time for these instances (Time). For the
remaining instances the solver did not terminate within the time
limit. For these runs, we report the number of instances for which an
optimal solution was found, but could not be proven to be optimal
(Unprov), the number of instances for which a feasible, but not
optimal solution was found (Feas), and the number of instances for
which the solver was unable to find a feasible solution
(Infeas). Column ``Best'' presents the number of instances where the
formulation obtained the best value found over all $12$ formulations
of the same problem type. The last column gives the average solution
time for the instances that could be solved optimally with all
models. For the SALBP-1 this was the case for $160$ instances, and for
the SALBP-2 for $147$ instances.

\begin{table}
  \tbl{Comparison of formulations of the SALBP-1 on $269$
    classical benchmark problems~\parencite{Scholl/1993}.\label{tab:3}}
{\begin{tabular}{lrrrrrrrr}
\toprule
      & \multicolumn{3}{c}{Optimal} &      & \multicolumn{2}{c}{Sub-optimal}            \\
\crule{\cline{2-4}\cline{6-7}}
Model & Proven                      & Time & Unprov. &  & Feas. & Infeas. & Best & Time \\
\colrule                                                                                    
PA1   & 167                         & 25.5 & 24      &  & 43    & 35      & 208  & 22.2 \\
PA2   & 182                         & 22.0 & 17      &  & 35    & 35      & 216  & 5.9  \\
PA3   & 187                         & 26.8 & 16      &  & 24    & 42      & 208  & 8.9  \\ 
PA4   & 189                         & 25.4 & 14      &  & 22    & 44      & 209  & 6.6  \\
\colrule                                                                                    
BW1   & 187                         & 39.3 & 17      &  & 40    & 25      & 210  & 15.5 \\ 
BW2   & 194                         & 30.4 & 16      &  & 44    & 15      & 220  & 4.3  \\ 
BW3   & 196                         & 33.5 & 14      &  & 19    & 40      & 220  & 5.3  \\ 
BW4   & 196                         & 28.3 & 14      &  & 26    & 33      & 228  & 3.4  \\
\colrule                                                                                    
NF1   & 194                         & 22.7 & 17      &  & 57    & 1       & 248  & 3.5  \\ 
NF2   & 197                         & 20.5 & 17      &  & 54    & 1       & 254  & 3.1  \\ 
NF3   & 201                         & 21.5 & 12      &  & 55    & 1       & 252  & 2.3  \\ 
NF4   & 200                         & 20.7 & 16      &  & 49    & 4       & 259  & 3.8  \\
\botrule
\end{tabular}}

\end{table}

\begin{table}
  \tbl{Comparison of formulations of the SALBP-2 on $302$
    classical benchmark problems~\parencite{Scholl/1993}.\label{tab:4}}
{\begin{tabular}{lrrrrrrrr}
\toprule
      & \multicolumn{3}{c}{Optimal} &      & \multicolumn{2}{c}{Sub-optimal}            \\
\crule{\cline{2-4}\cline{6-7}}
Model & Proven                      & Time & Unprov. &  & Feas. & Infeas. & Best & Time \\
\colrule
PA1   & 162                         & 49.4 & 18      &  & 122   & 0       & 192  & 39.8 \\ 
PA2   & 173                         & 51.4 & 15      &  & 114   & 0       & 202  & 31.7 \\ 
PA3   & 187                         & 40.5 & 9       &  & 91    & 15      & 204  & 8.2  \\ 
PA4   & 187                         & 33.7 & 11      &  & 88    & 16      & 206  & 11.3 \\ 
\colrule                                                                               
BW1   & 188                         & 38.6 & 11      &  & 103   & 0       & 216  & 16.1 \\ 
BW2   & 185                         & 36.5 & 16      &  & 101   & 0       & 220  & 11.2 \\ 
BW3   & 192                         & 26.0 & 10      &  & 94    & 6       & 218  & 5.4  \\ 
BW4   & 187                         & 29.7 & 8       &  & 106   & 1       & 211  & 8.8  \\ 
\colrule                                                                               
NF1   & 187                         & 33.7 & 16      &  & 99    & 0       & 237  & 8.5  \\ 
NF2   & 191                         & 39.2 & 17      &  & 94    & 0       & 247  & 6.2  \\ 
NF3   & 200                         & 36.9 & 8       &  & 94    & 0       & 224  & 6.2  \\ 
NF4   & 196                         & 33.9 & 11      &  & 95    & 0       & 225  & 6.4  \\ 
\botrule
\end{tabular}}

\end{table}

For both SALBP types, the results show a clear tendency to find and
prove more optimal solutions with better station limits and precedence
constraints. However, station limits of type 3 and 4 tend to make it
more difficult to find feasible solutions, and consequently reduce the
number of best solutions found. Statistically, the solution values
obtained by formulations NF$n$ are significantly less than those
obtained by the corresponding formulations PA$n$ and BW$n$ for both
problem types (with the exception of NF3, which is only marginally
better than BW3 for the SALBP-2). There is neither a significant
difference of solution values between precedence constraints BW and
PA, nor between different station limits.

The solution times also tend to decrease with better constraints, but
the reduction again is less pronounced or non-existent for station
limits of type 3 and 4. Statistically, formulations NF$n$ solve the
instances in significantly shorter time than formulations PA$n$ and
BW$n$, and station limits of type $2$ are better than those of type
$1$, which is expected they since only reduce the number of variables.

In summary, we find that station limits of type 2 help to reduce the
solution time, but in general better precedence constraints are more
important for improving solutions and reducing the solution time
than the improved station limits. In a previous study
\textcite{Pastor.etal/2004} found no significant difference between
formulations PA and BW, but observe that formulation BW leads to
shorter solution times. Our results confirm this, but we find the
reduction in solution time only significant for the SALBP-2. This may
come from the difference between the used solvers (CPLEX 8.0 and CPLEX
12.4). In another study \textcite{Pastor.Ferrer/2009} find that the
dynamic station limits (PA3) increase the number of provably optimal
solutions over formulation PA2, which is corroborated by our
findings. The solution values, on the other hand, do not decrease
significantly, and the dynamic station limits make it more difficult
to find feasible solutions.

From a practical point of view one may prefer the formulations NF4 for
the SALBP-1 and NF2 for the SALBP-2 which achieve the largest number
of best solutions in a short time.


\subsection{Results for the UALBP-1}

We tested Urban's formulation of the UALBP-1 and the improved
formulation proposed in Section~\ref{sec:ualbp1} on the $269$
instances of the SALBP-1. The experimental settings were the same as
in the tests of the SALBP related above. Table~\ref{tab:ualbp} shows
the comparison of the two models.

As expected, the conclusions for the SALBP-1 also hold for the
UALBP-1. The improved model finds and proves more optimal solutions,
and finds more and significantly better solution values (for $p=0.05$)
in about the same time used by the original model.

\begin{table}
  \tbl{Comparison of the standard formulation and an improved
    formulation for the UALBP-1 on $269$
    classical benchmark problems~\parencite{Scholl/1993}.\label{tab:ualbp}  }
{\begin{tabular}{lrrrrrrrr}
\toprule
         & \multicolumn{3}{c}{Optimal} &      & \multicolumn{2}{c}{Sub-optimal}            \\
\crule{\cline{2-4}\cline{6-7}}
Model    & Proven                      & Time & Unprov. &  & Feas. & Infeas. & Best & Time \\
\colrule
Standard & 175                         & 32.3 & 21      &  & 64    & 9       & 226  & 26.2 \\
    NF4  & 190                         & 44.7 & 23      &  & 55    & 1       & 256  & 32.5 \\
\botrule
  \end{tabular}}
  
\end{table}

\subsection{Results for the BPP-P}

We finally tested the formulation of the BPP-P proposed in
Section~\ref{sec:bppp} and compared it to the results obtained by
\textcite{DellAmico.etal/2012}. These results are available
online \parencite{DellAmico.etal/2010} and have been obtained in an
environment similar to ours (a PC with a Pentium processor running at
$3$ GHz, and CPLEX 12). To be able to make a direct comparison we used
the same settings running the solver with only one thread and a time
limit of two hours.

The results can be seen in Table~\ref{tab:bppp}. It reports for each
group of instances the number of instances (Ins), the results for the
Branch-and-bound algorithm as well as the model proposed by
\textcite{DellAmico.etal/2012}, and for the model proposed here. For
each approach the table gives the average relative deviation from the
best known lower bound (Dev), the average solution time (Time), and
the number of instances which could not be solved within the time
limit (Un). Such instances contribute with the time limit of $7200$s
to the average execution time. Note that the relative deviations
slightly differ from those reported in \textcite{DellAmico.etal/2010},
since we updated the lower bounds according to our results.

The new formulation drastically improves the results obtained by the
basic model. The number of instances that could not be solved in two
hours reduces from $81$ to $22$, and the average execution time is a
factor of $50$ faster. The best model is competitive with the
Branch-and-bound algorithm specifically designed for this problem:
although it solves $19$ problems less, it obtains a comparable
relative deviation on the remaining instances, and is able to solve
them a factor of almost four faster. Again, the conclusion for the
SALBP holds also for the BPP-P, and the improved constraints seem to
be even more effective for strict precedences.

\begin{table}
  \tbl{Comparison of the results of a Branch-and-bound algorithm
    and a formulation proposed by \textcite{DellAmico.etal/2012} (BW1)
    to the improved formulation NF4 of the BPP-P.\label{tab:bppp}}
  {\begin{tabular}{lrrrrr@{}rrrr@{}rrrr}
    \toprule
    &&&\multicolumn{3}{c}{Branch-and-bound}&&\multicolumn{3}{c}{BW1}&&\multicolumn{3}{c}{NF4}\\
    \crule{\cline{4-6}\cline{8-10}\cline{12-14}}
    Name & Ins. && Dev. & Time & Un. && Dev. & Time & Un. && Dev. & Time & Un.\\
    \crule{\cline{1-2}\cline{4-6}\cline{8-10}\cline{12-14}}

Arcus1    & 16  &  & 0.00 & 5.69   & - &  & 0.00  & 11.09   & -  &  & 0.00 & 0.07    & -  \\
Arcus2    & 17  &  & 0.00 & 211.49 & - &  & 1.70  & 2192.00 & 5  &  & 0.00 & 1,76    & -  \\
Barthold  & 8   &  & 0.00 & 4.41   & - &  & 0.00  & 80.66   & -  &  & 0.00 & 0.18    & -  \\
Barthol2  & 27  &  & 0.18 & 575.35 & 2 &  & 8.68  & 7200.00 & 27 &  & 1.89 & 5333.87 & 20 \\
Bowman    & 1   &  & 0.00 & 0.05   & - &  & 0.00  & 0.04    & -  &  & 0.00 & 0.00    & -  \\
Buxey     & 7   &  & 0.00 & 171.94 & - &  & 0.00  & 0.40    & -  &  & 0.00 & 0.02    & -  \\
Gunther   & 7   &  & 0.00 & 114.83 & - &  & 0.00  & 0.54    & -  &  & 0.00 & 0.02    & -  \\
Hahn      & 5   &  & 0.00 & 0.22   & - &  & 0.00  & 0.20    & -  &  & 0.00 & 0.01    & -  \\
Heskiaoff & 6   &  & 0.00 & 0.22   & - &  & 0.00  & 0.20    & -  &  & 0.00 & 0.01    & -  \\
Jackson   & 6   &  & 0.00 & 0.06   & - &  & 0.00  & 0.05    & -  &  & 0.00 & 0.00    & -  \\
Jaeschke  & 5   &  & 0.00 & 0.03   & - &  & 0.00  & 0.03    & -  &  & 0.00 & 0.00    & -  \\
Kilbridge & 10  &  & 0.00 & 0.58   & - &  & 0.00  & 1.15    & -  &  & 0.00 & 0.02    & -  \\
Lutz1     & 6   &  & 0.00 & 0.13   & - &  & 0.00  & 0.11    & -  &  & 0.00 & 0.00    & -  \\
Lutz2     & 11  &  & 0.00 & 106.99 & - &  & 6.63  & 3736.89 & 5  &  & 0.00 & 1.35    & -  \\
Lutz3     & 12  &  & 0.00 & 1.66   & - &  & 0.00  & 0.80    & -  &  & 0.00 & 0.06    & -  \\
Mansoor   & 3   &  & 0.00 & 0.08   & - &  & 0.00  & 0.04    & -  &  & 0.00 & 0.00    & -  \\
Mertens   & 6   &  & 0.00 & 0.03   & - &  & 0.00  & 0.02    & -  &  & 0.00 & 0.00    & -  \\
Mitchell  & 6   &  & 0.00 & 0.13   & - &  & 0.00  & 0.10    & -  &  & 0.00 & 0.00    & -  \\
Mukherje  & 13  &  & 0.00 & 76.92  & - &  & 0.61  & 1154.17 & 2  &  & 0.00 & 2.30    & -  \\
Roszieg   & 6   &  & 0.00 & 66.86  & - &  & 0.00  & 0.13    & -  &  & 0.00 & 0.00    & -  \\
Sawyer    & 9   &  & 0.00 & 134.48 & - &  & 0.00  & 5.41    & -  &  & 0.00 & 0.04    & -  \\
Scholl    & 26  &  & 0.00 & 114.39 & - &  & 31.52 & 6160.11 & 21 &  & 0.00 & 1.66    & -  \\
Tonge     & 16  &  & 0.00 & 15.74  & - &  & 1.51  & 1426.45 & 3  &  & 0.00 & 0.20    & -  \\
Warnecke  & 16  &  & 0.00 & 870.38 & 1 &  & 7.56  & 5513.81 & 12 &  & 0.00 & 109.73  & -  \\
Wee-Mag   & 24  &  & 0.00 & 2.24   & - &  & 1.29  & 2337.76 & 6  &  & 1.06 & 1089.12 & 2  \\
\crule{\cline{1-2}\cline{4-6}\cline{8-10}\cline{12-14}}
Tot./Avg. & 269 &  & 0.01 & 157.20 & 3 &  & 4.98  & 2289.93 & 81 &  & 0.12 & 40.06   & 22 \\
\botrule
  \end{tabular}}
\end{table}

\section{Conclusions}
\label{sec:conclusions}

We have proposed an improved formulation of the precedence constraints
and the station limits for the SALBP, and shown that they
theoretically dominate other constraints proposed in the
literature. They are applicable to related assembly line balancing
problems with similar constraints. Comparing the new and existing
models, we have provided a classification of the relationships between
models using impulse variable used in the literature.

Computational experiments confirm the theoretical comparison. The
proposed precedence constraints can improve upon the constraints of
\textcite{Patterson.Albracht/1975} and \textcite{Bowman/1960}, finding
and proving the optimality of more solutions, and finding more best
values. A conservative statistical test shows that the improvement of
the solution value is significant.

Two case studies on the UALBP-1 and the BPP-P indicate that the
conclusions for the SALBP also apply to related problems, which
further highlights the importance of the proposed models.

\bibliography{all}
\bibliographystyle{plainnat}

\end{document}